\theoremstyle{definition}
\newtheorem{definition}{Definition}
\newtheorem{theorem}[definition]{Theorem}
\newtheorem{lemma}[definition]{Lemma}
\newtheorem{cor}[definition]{Corollary}
\begin{document}

\title{Local Access to Sparse Connected Subgraphs Via Edge Sampling}
\author{Rogers Epstein\thanks{Massachusetts Institute of Technology, \href{malito:rogersep@mit.edu}{rogersep@mit.edu}}}


\maketitle
\thispagestyle{empty}

\begin{abstract}
We contribute an approach to the problem of locally computing sparse connected subgraphs of dense graphs. In this setting, given an edge in a connected graph $G = (V, E)$, an algorithm locally decides its membership in a sparse connected subgraph $G^* = (V, E^*)$, where $E^* \subseteq E$ and $|E^*| = o(|E|)$. Such an approach to subgraph construction is useful when dealing with massive graphs, where reading in the graph's full network description is impractical.

While most prior results in this area require assumptions on $G$ or that $|E'| \le (1+\epsilon)|V|$ for some $\epsilon > 0$, we relax these assumptions. Given a general graph and a parameter $T$, we provide membership queries to a subgraph with $O(|V|T)$ edges using $\widetilde{O}(|E|/T)$ probes. This is the first algorithm to work on general graphs and allow for a tradeoff between its probe complexity and the number of edges in the resulting subgraph.

We achieve this result with ideas motivated from edge sparsification techniques that were previously unused in this problem. We believe these techniques will motivate new algorithms for this problem and related ones. Additionally, we describe an efficient method to access any node's neighbor set in a sparsified version of $G$ where each edge is deleted with some i.i.d. probability.
\end{abstract}

\newpage
\pagenumbering{arabic} 
\section{Introduction}

Many real-world applications of graph algorithms apply to massive inputs, from social networks to core internet infrastructure. Given that many algorithms must be run frequently, classical models of computation become immensely slow and inefficient on these large graphs. The study of sublinear algorithms aims to find fast procedures that only look at a small fraction of the input while minimizing the error in the result.

In computational graph problems, many sublinear algorithms aim to simulate query access to some function of the input graph. One class of algorithms that fall in this category are Local Computation Algorithms, or LCAs, as defined in \cite{DBLP:journals/corr/abs-1104-1377}. An LCA aims to ``maintain'' some global solution while only performing sublinear probes on the input for each time the algorithm is queried. Namely, an LCA may, for example, allow a user to query the color of a given node such that the result is consistent with a valid $k$-coloring. Other examples of LCAs determine if a given vertex is in some maximal independent set that is generated using the procedure's internal randomness.

We are interested in algorithms that ``maintain'' some subgraph $H$; given any edge in the original graph, they look at a sublinear number of edges to determine whether or not the given edge is in $H$. Such an algorithm aims to answer consistently among all possible query inputs, as if it actually has the global solution in memory and were able to access $H$ directly. While LCAs strive for probe complexities that are sublinear in the number of vertices, being sublinear in the number of edges suffices for our purposes.

In particular if we are given some $G = (V, E)$, we strive to provide query access to some sparse, connected subgraph $G^* = (V, E^*)$ where $E^* \subseteq E$. As such, our algorithm $\mathcal{A}$ is a function $\mathcal{A}: E \rightarrow \{accept, reject\}$, where $e \in E^*$ if and only if $\mathcal{A}(e) = accept$. We also want each call to $\mathcal{A}$ to make some sublinear number of probes (i.e. $o(|E|)$) to the original graph structure.

This motivates the following definition:

\begin{definition}\label{lscg}
\textbf{(Local Algorithms for Sparse Connected Graphs)} An algorithm $\mathcal{A}$ is a local sparse connected graph (LSCG) algorithm yielding a subgraph with $f(|V|)$ edges if given probe access to $G = (V, E)$, $\mathcal{A}$ tests membership of $e$ in some subgraph of $G$, $G^* = (V, E^*)$, such that the following conditions hold with high probability (over the internal coin flips of $\mathcal{A}$):

$1.$ $G^*$ is connected

$2.$ $|E^*| \le f(|V|)$

\end{definition}

The ``local'' description of these algorithms comes from them requiring a sublinear number of probes with respect to the size of the original graph $G$. Additionally, these algorithms are ``sparse'' because we enforce that $f(|V|) = o(|E|)$. To implement such an algorithm, we assume we have a public source of (unbounded) randomness.

While performing a breadth first search from any node could yield an optimal $G^*$ (only $|V| - 1$ edges), the probe complexity of such an algorithm would be linear in the original number of edges, and thus undesirable in our setting. On the other hand, an algorithm that always returns $accept$ would have zero probe complexity but $|E^*| = |E|$, which is also undesirable. With most of the prior work having focused on minimizing $|E'|$, we hope to find some balance within this tradeoff of probe complexity and the number of edges in the resulting subgraph.

\subsection{Prior Work}

The study of this problem was initiated in \cite{SpanningGraphs}, where the authors provide LSCG algorithms that yield subgraphs with at most $(1+\epsilon)n$ edges using $O(1)$ probes on bounded-degree graphs with low expansion properties. They also design an algorithm that uses $\widetilde{O}(\sqrt{n})$\footnote{The notation $\widetilde{O}(g(n))$ is equivalent to $O(g(n)(\log n)^a)$, for some constant $a$.} probes on bounded-degree graphs with high expansion. They show this probe complexity is essentially tight by demonstrating that for general bounded-degree graphs, $\Omega(\sqrt{n})$ probes are necessary per edge query.

Subsequent results continued to aim for subgraphs with at most $(1+\epsilon)n$ edges, but also hoped for a subgraph of small \emph{stretch}. The measure of stretch of a subgraph is equivalent to the maximum distance between two nodes in the subgraph that share an edge in the original graph. To obtain such subgraphs, these works also focused on special degree-bounded classes of input graphs (\cite{levi2015constructing}, \cite{minorfree}, \cite{canwe}, \cite{LenzenL17}). The techniques developed by Lenzen-Levi were expanded upon in \cite{spannerLCAs} to give the first LSCG algorithm with a less strict edge bound. In particular, given a degree upper bound of $\Delta$, the authors give an algorithm that constructs a connected subgraph with $\widetilde{O}(n^{1+1/k})$ edges using $\widetilde{O}(\Delta^4n^{2/3})$ probes per edge query.

Importantly, this was also the first work to give an LSCG algorithm that works on general graphs. For either $r \in \{2, 3\}$, the authors give an algorithm that yields a subgraph with $\widetilde{O}(n^{1+1/r})$ edges using $\widetilde{O}(n^{1-1/(2r)})$ probes per edge query.

While all of these works use different analyses, they largely rely on the same strategy: identify clusters of nodes in the graph, generate sparse spanning subgraphs within those clusters, and then connect these clusters using few edges. Our paper is the first to use an entirely different tactic.

\subsection{Overview of Results and Techniques}

In this work we provide and analyze an LSCG algorithm such that when given a valid\footnote{As will be justified in Section \ref{simple-anal}, we consider a $T$ such that $T = \omega(\log^2 n)$ and $T = \widetilde{o}(m)$} input parameter $T$, yields a subgraph with $f(|V|) = O(|V|T)$ edges. Additionally, its probe complexity is $\widetilde{O}(|E|/T)$. As will be discussed in Section \ref{anal}, this improves over prior results on $\Delta$ bounded-degree graphs for $\Delta = \Omega(n^{1/9})$. Additionally, this is the first result that works for general graphs and allows for a spectrum of upper bounds on the number of edges in the resulting subgraph.

The main approach of this algorithm is to attempt to measure a metric of the ``connectivity'' of an edge, and keep each edge with some probability that is a function of this metric. Specifically, this metric is the ``strong connectivity'' of an edge as defined by Benczúr-Karger in \cite{StrongConnKarger}, and is repeated here in Definition \ref{strongdef}. Our technique at large is described in depth in Section \ref{mainch}.

The strong connectivity of an edge is approximated by a test that accesses a random, sparsified copy $G' = (V, E')$ of the input graph $G$, where each edge is kept independently with some equal probability\footnote{Such a subgraph is called a \emph{skeleton}.}. In particular, this test aims to fully explore some connected component in $G'$, and doing so with $O(|E'|)$ probes instead of $O(|E|)$ is a nontrivial task. As this is an interesting result in of itself, we dedicate Section \ref{sparsifying_graphs} to describing how to efficiently provide this local access. In Section \ref{anal} we analyze our algorithm for special cases and compare it to past results. Finally, we give discuss relevant open questions related to this problem and our techniques in Section \ref{future}.

\section{Preliminaries}\label{ch2:pre}

\subsection{Notation and the Model}\label{model}

We consider general graphs $G = (V, E)$ that are simple and undirected. Here, $V$ is the set of vertices of the graph, and $E$ is the set of edges connecting vertices. We define $n = |V|$ and $m = |E|$, and we assume that each vertex has $v \in V$ has a unique ID, over which there is some full ordering. We denote $N_G(u)$ to be the neighbor set of $u$ in a graph $G$, which we may write as $N(u)$ if $G$ is clear from context.

We assume probe access to a modified version of the input graph's incidence-list representation. Specifically, we are able to make constant time \textsc{degree}, \textsc{neighbor} and \textsc{adjacency} queries, where $\textsc{neighbor}(u, i)$ gives the $i$th neighbor of $u$, and $\textsc{adjacency}(u, v)$ gives $i$ if $v$ is the $i$th neighbor of $u$, and $\bot$ if they are not adjacent. Note that these are the same queries allowed in \cite{spannerLCAs}.

Our main goal is to construct an LSCG algorithm by locally accessing (with high probability) an unweighted version of an \emph{$\epsilon$-sparsification} of $G$, where $\epsilon > 0$.

\begin{definition}\label{sparsification}
An \emph{$\epsilon$-sparsification} of a graph $G = (V, E)$ is a weighted subgraph $G' = (V, E')$ of $G$ where $E' \subseteq E$ with a weight function $w: E' \rightarrow \mathbb{R}$, such that for all cuts $(S, V \backslash S)$, the weighted value of this cut in $G'$ is within a $(1\pm \epsilon)$ multiplicative factor of the value of the corresponding cut in $G$.
\end{definition}

\subsection{Sparsification Results}\label{ch2:b-k}

The core idea of the approach in this thesis is based on definitions and theorems initiated in \cite{StrongConnKarger}, along with its preliminary version, \cite{BenczurK96}. In these papers the authors consider a strengthening of the notion of $k$-connected components\footnote{These are components that have a min cut at least $k$.}:

\begin{definition}[\cite{BenczurK96}]
A \emph{$k$-strong component} is a maximal $k$-connected vertex-induced subgraph.
\end{definition}

Here, maximal means no vertices can be added to the component that would maintain its $k$-connectedness. Recall that a subgraph is vertex-induced if it contains exactly the edges in the original graph whose endpoints are both in the subgraph. Using $k$-strong components, \cite{BenczurK96} defines the \emph{strong connectivity} of an edge:

\begin{definition}[\cite{BenczurK96}]\label{strongdef}
Given an edge $e$ in a graph $G = (V, E)$ where $e \in E$, the \emph{strong connectivity} $s_e$ of $e$ is the largest value $k$ such that $e$ is contained in a $k$-strong component.
\end{definition}

One motivation for this definition comes from the following theorem of Karger, which deals with graphs with a known minimum cut:

\begin{theorem}[\cite{Karger:1994:RSC:195058.195422}]\label{mincut-thm}
Given parameters $n$ and $d$, define $\lambda'_{\epsilon} = \frac{3}{\epsilon^2}(d+2)\log n$. In a graph with $n$ vertices and min-cut $c$, independently keeping each edge with probability $p = \min(1, \lambda'_{\epsilon}/c)$ and giving it weight $1/p$ yields an $\epsilon$-sparsification, with error probability $O(n^{-d})$.
\end{theorem}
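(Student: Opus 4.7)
The plan is to prove this by analyzing one cut at a time with a multiplicative Chernoff bound, and then controlling the union bound over all cuts using Karger's cut-counting lemma. The case $p = 1$ is trivial: every edge is kept with weight $1$, so $G' = G$ deterministically and every cut value is preserved exactly. So assume $p = \lambda'_{\epsilon}/c < 1$ from here on.

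First, I would fix a single cut of value $v \ge c$, consisting of $v$ edges. Letting $X_e$ be the indicator that edge $e$ survives sampling, the weighted cut value in $G'$ is $W = \sum_{e} X_e / p$, and linearity gives $\mathbb{E}[W] = v$. Since $vp \ge \lambda'_{\epsilon} \cdot (v/c)$, the standard multiplicative Chernoff bound yields
\[
\Pr\bigl[\,|W - v| > \epsilon v\,\bigr] \le 2\exp\bigl(-\epsilon^2 v p / 3\bigr) \le 2\,n^{-(v/c)(d+2)}.
\]
Hence a cut of value at least $kc$ fails to be $(1\pm\epsilon)$-preserved with probability at most $2 n^{-k(d+2)}$.

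Next, I would invoke Karger's cut-counting bound from \cite{Karger:1994:RSC:195058.195422}: a graph with minimum cut $c$ contains at most $n^{2\alpha}$ cuts of value at most $\alpha c$. I would partition cuts into bands by value, with band $k$ containing cuts of value in $[kc,(k+1)c)$; band $k$ then has at most $n^{2(k+1)}$ cuts by the lemma. A union bound inside band $k$ contributes at most $2 n^{2(k+1)-k(d+2)} = 2 n^{2-kd}$ to the total failure probability. Summing the resulting geometric series over $k \ge 1$ yields an overall failure probability of $O(n^{-d})$, with the constants in $\lambda'_{\epsilon}$ tuned to absorb the leading polynomial factor contributed by the smallest band.

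The main obstacle is this interplay between the two ingredients: the Chernoff exponent must grow linearly in $v/c$ while the cut count grows polynomially in $v/c$, and the accounting must be tight enough that their ratio decays geometrically across every band. The choice $\lambda'_{\epsilon} = \Theta((d+2)\log n / \epsilon^{2})$ is calibrated precisely so that the Chernoff tail $n^{-k(d+2)}$ dominates the cut count $n^{2(k+1)}$ for each $k \ge 1$. Once that alignment is in place, the remaining work — handling the trivial $p = 1$ regime, dealing with non-integer cut-to-min-cut ratios, and collecting the geometric sum — is routine bookkeeping.
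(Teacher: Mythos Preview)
The paper does not actually prove this theorem; it is quoted verbatim from Karger's 1994 paper and used as a black box. Your proposal is, in outline, exactly Karger's original argument (Chernoff per cut, then union bound via the cut-counting lemma), so in spirit you are matching the source rather than diverging from it.

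There is, however, a genuine bookkeeping gap in your banding step. With integer-width bands $[kc,(k+1)c)$ you bound the count by $n^{2(k+1)}$ (using the upper endpoint) but the per-cut failure probability by $2n^{-k(d+2)}$ (using the lower endpoint). The mismatch of one unit in $\alpha = v/c$ costs a factor $n^{2}$, and indeed your own product $2n^{2-kd}$ summed over $k\ge 1$ is $O(n^{2-d})$, not $O(n^{-d})$. Your remark that ``the constants in $\lambda'_{\epsilon}$ are tuned to absorb the leading polynomial factor'' does not rescue this: the constant $3$ in $\lambda'_{\epsilon}$ is fixed by the theorem statement, and no integer-width banding with that constant closes the $n^{2}$ gap. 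The standard fix (and what Karger actually does) is to let the band width tend to zero --- equivalently, use Abel summation / Stieltjes integration by parts against $N(\alpha)\le n^{2\alpha}$ --- so that the count and the tail are evaluated at essentially the same $\alpha$. Once you do that, the contribution becomes $\int_{1}^{\infty} n^{2\alpha}\cdot (d+2)\log n \cdot n^{-\alpha(d+2)}\,d\alpha = O(n^{-d})$ as claimed. Everything else in your plan is correct.
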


\begin{figure}[h]\label{fig:sparse}
\centering
\includegraphics[width=\textwidth]{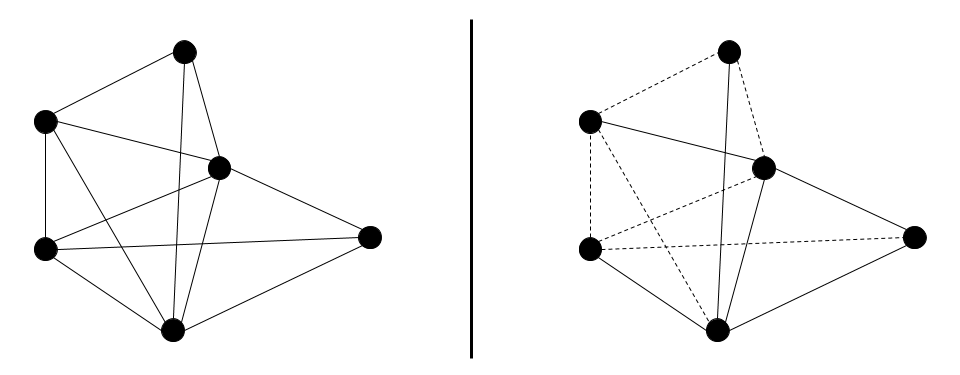}
\caption{Given a graph $G$ on the left, we can produce a skeleton $G'$ on the right by independently at random keeping each edge according to some probability $p$. The dotted lines represent an edge that was deleted with probability $1-p$.}
\end{figure}

Building on this result, Benczúr-Karger show that if edges can be sampled as a function of their strong-connectivity instead of the global min-cut, then a stronger statement is possible. In particular, the following theorem reduces the number of edges in the subgraph resulting from the sparsification procedure, assuming $c = \widetilde{o}(m/n)$:

\begin{theorem}[\cite{BenczurK96}]\label{sparse-thm}
Given parameters $n$ and $d$, define $\lambda_{\epsilon} = \frac{16}{\epsilon^2}(d+2)\log n$. In a graph with $n$ vertices, independently keeping each edge $e$ with probability $p_e = \min(1, \lambda_{\epsilon}/s_e)$ and assigning it weight $1/p_e$ yields an $\epsilon$-sparsification, with error probability at most $n^{-d}$.
\end{theorem}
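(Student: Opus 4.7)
The plan is to decompose the edges by strong connectivity, apply the min-cut sampling theorem (Theorem~\ref{mincut-thm}) level by level, and combine via a union bound over both levels and cuts within each level's components.

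First, I would partition the edges geometrically: for $i = 1, \ldots, \lceil \log_2 n \rceil$, let $F_i = \{e : s_e \in [2^{i-1}, 2^i)\}$, and write $R_i = \bigcup_{j \geq i} F_j$ for the edges with $s_e \geq 2^{i-1}$. The key structural property is that $R_i$ is the edge-disjoint union of the $2^{i-1}$-strong components of $G$: by Definition~\ref{strongdef}, any edge with $s_e \geq 2^{i-1}$ lies in some such component, and conversely all edges inside a $2^{i-1}$-strong component $H$ (which is vertex-induced) automatically have $s_e \geq 2^{i-1}$. Each such $H$ has min-cut at least $2^{i-1}$ and all edges inside it are sampled at rate at most $\lambda_\epsilon/2^{i-1}$, so Theorem~\ref{mincut-thm} applies to $H$.

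Next, fix a cut $(S, V \setminus S)$ of $G$ with edge set $C$, a level $i$, and a $2^{i-1}$-strong component $H$ crossed nontrivially by $C$. Writing $C_H = C \cap E(H)$ and $C_H^{(i)} = C_H \cap F_i$, the sampled weighted contribution from $C_H^{(i)}$ is $W_H^{(i)} := \sum_{e \in C_H^{(i)}} X_e / p_e$, with mean $|C_H^{(i)}|$. Within $F_i$ the sampling rates lie in $[\lambda_\epsilon/2^i, \lambda_\epsilon/2^{i-1}]$---nearly uniform up to a factor of two---so a Bernstein-type inequality yields $\Pr\bigl[\bigl|W_H^{(i)} - |C_H^{(i)}|\bigr| > \epsilon |C_H|\bigr] \leq 2\exp(-\Omega(\epsilon^2 |C_H| \lambda_\epsilon / 2^i))$. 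Using $|C_H| \geq 2^{i-1}$ (from the min-cut of $H$) this is at most $2\exp(-\Omega(\epsilon^2 \lambda_\epsilon))$ for the smallest relevant cuts of $H$, with the exponent growing linearly in $|C_H|/2^{i-1}$ for larger cuts.

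For the union bound I would invoke Karger's cut-counting lemma: the number of cuts of $H$ with value at most $\alpha \cdot 2^{i-1}$ is bounded by $n_H^{2\alpha}$. Summing over dyadic cut-value classes in $H$ and noting that $\Omega(\epsilon^2 \lambda_\epsilon) = \Omega((d+2)\log n)$ dominates the counting exponent $2\alpha \log n_H$ precisely because $\lambda_\epsilon = \tfrac{16}{\epsilon^2}(d+2)\log n$, each component's total failure probability is $\leq n_H^{-(d+c)}$ for a suitable constant $c$. Union-bounding over the at most $n$ components at each of $\lceil \log_2 n \rceil$ levels leaves total failure $\leq n^{-d}$. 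Given every per-level event, the triangle inequality together with the edge-disjoint decomposition $C = \bigsqcup_{i,H} C_H^{(i)}$ delivers $\bigl|W - |C|\bigr| \leq \epsilon|C|$, as required by Definition~\ref{sparsification}.

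The main obstacle I anticipate is the per-level/per-component accumulation: summing $\sum_{i,H} \bigl|W_H^{(i)} - |C_H^{(i)}|\bigr|$ without losing an extra $\log n$ factor from the number of levels. Bounding each term by $\epsilon|C_H|$ and summing naively could give $\epsilon\sum_{i,H}|C_H| = O(\epsilon \log n \cdot |C|)$ in the worst case. The Benczúr--Karger resolution is to re-do the Chernoff step with respect to $|C_H^{(i)}|$ itself rather than the coarser $|C_H|$, exploiting the nesting of strong components so that per-level concentration holds within the single-$\log n$-factor budget of $\lambda_\epsilon$; making this sharper accounting precise---and getting the constants to land at the promised $16/\epsilon^2$---is the technical heart of the proof.
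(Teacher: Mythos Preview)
The paper does not prove this statement at all: Theorem~\ref{sparse-thm} appears in the preliminaries (Section~\ref{ch2:b-k}) as a cited result from \cite{BenczurK96}, treated as a black box alongside Theorem~\ref{mincut-thm}, Corollary~\ref{approx}, and Lemmas~\ref{edgebnd}--\ref{recipsum}. There is therefore no ``paper's own proof'' to compare your proposal against.

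That said, your sketch is a faithful outline of the actual Bencz\'ur--Karger argument: the geometric bucketing of edges by strong connectivity, the observation that each level is a disjoint union of vertex-induced subgraphs with known min-cut so that Theorem~\ref{mincut-thm} applies componentwise, and the union bound via Karger's cut-counting lemma are exactly the ingredients used in \cite{BenczurK96} and \cite{StrongConnKarger}. You have also correctly identified the genuine technical obstacle---that a naive per-level triangle inequality picks up an extra $\log n$ factor---and that the resolution lies in a more careful nested accounting exploiting the laminar structure of strong components. If you were asked to supply a self-contained proof, this is the right skeleton; but for the purposes of this paper no proof is expected, and a citation suffices.
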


It should be recognized that these procedures use full independence. As we are mostly concerned with our algorithm's probe and runtime complexity, we assume quick access to arbitrarily many independent bits. If needed, these bits can be generated efficiently using pseudorandom generators, as described in \cite{prg}.

One barrier to the usefulness of Theorem \ref{sparse-thm} is the difficulty in calculating the exact strong connectivity of edges. The following corollary suggests an easier approach:

\begin{cor}\label{approx}
Suppose that for each edge $e$, we have some approximation of strong connectivity $\hat{s_e}$ such that $s_e/\alpha \le \hat{s_e} \le s_e$. Then, sampling each edge with probability $\lambda_{\epsilon}/\hat{s_e}$ (and using the corresponding reweighting scheme from Theorem \ref{sparse-thm}), results in a graph where all cut values are preserved to within a multiplicative factor of $(1\pm \epsilon)$. Additionally, the graph will have $O(\alpha n\log n)$ edges with high probability.
\end{cor}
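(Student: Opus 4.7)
The strategy is to reduce both claims to the Benczúr-Karger analysis underlying Theorem~\ref{sparse-thm}, exploiting the fact that under-estimating $s_e$ only inflates each edge's sampling probability. Set $p_e = \lambda_\epsilon/s_e$ and $\hat{p}_e = \lambda_\epsilon/\hat{s}_e$, so that the hypothesis $\hat{s}_e \le s_e$ gives $\hat{p}_e \ge p_e$, while $\hat{s}_e \ge s_e/\alpha$ gives $\hat{p}_e \le \alpha p_e$.

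For cut preservation, note that the modified scheme samples $e$ with probability $\hat{p}_e$ and weights it by $1/\hat{p}_e$. The expected weighted contribution of $e$ to any cut is still $1$, and both the range $1/\hat{p}_e$ and the per-edge variance $(1/\hat{p}_e)(1-\hat{p}_e)$ of each sampled edge are no larger than their values under the original $p_e$-scheme from Theorem~\ref{sparse-thm}. The Chernoff/union-bound argument of \cite{BenczurK96}, which decomposes the graph into nested strong components and invokes Theorem~\ref{mincut-thm} on each, requires only that edges be sampled with probability \emph{at least} $\lambda_\epsilon/s_e$ and weighted by the reciprocal of that sampling probability; both conditions hold here, so the same analysis delivers the $(1\pm\epsilon)$-approximation of every cut with failure probability at most $n^{-d}$.

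For the edge-count bound, the expected number of retained edges is
\[
\mathbb{E}[|E^{\prime}|] \;=\; \sum_{e \in E} \hat{p}_e \;\le\; \alpha \sum_{e \in E} \frac{\lambda_\epsilon}{s_e} \;\le\; \alpha\,\lambda_\epsilon\,(n-1) \;=\; O(\alpha n \log n),
\]
where the first inequality uses $\hat{s}_e \ge s_e/\alpha$ and the second uses the Benczúr-Karger identity $\sum_e 1/s_e \le n-1$ (a consequence of forest packing). Since the edge indicators are mutually independent, a standard multiplicative Chernoff bound lifts this to a high-probability bound of the same asymptotic order.

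\paragraph{Main obstacle.} The only delicate point is the cut-preservation step: one must verify that the Benczúr-Karger concentration argument continues to work when edges inside a single strong-connectivity class are sampled at heterogeneous (though uniformly inflated) probabilities. The justification is exactly the ``lower bound on $p_e$, reciprocal weight'' template above, but confirming it requires tracing the nesting/recursion structure of their proof to ensure that only these two properties are ever used. Once that is checked, the edge-count claim reduces to the displayed computation and an off-the-shelf Chernoff bound.
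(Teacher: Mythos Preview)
The paper does not supply an explicit proof of this corollary; it is stated immediately after Theorem~\ref{sparse-thm} as a direct consequence of the Bencz\'ur--Karger analysis, with Lemmas~\ref{edgebnd} and~\ref{recipsum} quoted afterward as the relevant supporting facts. Your argument fills in precisely the justification the paper leaves implicit: the oversampling observation $\hat{p}_e \ge p_e$ (from $\hat{s}_e \le s_e$) feeds back into the concentration proof behind Theorem~\ref{sparse-thm}, and the edge-count bound is exactly Lemma~\ref{recipsum} followed by a Chernoff bound. This is the intended route, and your identification of the one genuine checkpoint---that the Bencz\'ur--Karger decomposition only needs a \emph{lower} bound on each sampling probability together with reciprocal weighting---is accurate and is the standard way this corollary is invoked in the sparsification literature.
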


In \cite{StrongConnKarger}, the authors utilize Nagamochi-Ibaraki certificates (\cite{Nagamochi:1992:CEC}, \cite{journals/algorithmica/NagamochiI92}) to achieve a $2$-approximation in $\widetilde{O}(m)$ time. While is not known how to locally simulate this approximation scheme, a local version of this procedure would imply a LSCG algorithm. In proving the above theorems, \cite{StrongConnKarger} also showed several lemmas that will be useful for this work:

\begin{lemma}[\cite{StrongConnKarger}]\label{edgebnd}
For any $t \in \mathbb{N}$, there are at most $t(n-1)$ edges $e \in E$ such that $s_e \le t$.
\end{lemma}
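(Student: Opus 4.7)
My plan is to prove the bound by induction on $n$, using a recursive decomposition based on small cuts. The base case $n=1$ is trivial since there are no edges and $t(n-1)=0$. For the inductive step, I would split into two cases based on whether $G$ has a cut of size at most $t$.

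In the easy case, suppose $G$ is $(t+1)$-edge-connected. Then $G$ itself is a maximal $(t+1)$-edge-connected vertex-induced subgraph, so by definition $G$ is a $(t+1)$-strong component containing every edge. Hence $s_e \geq t+1$ for every $e \in E$, and there are zero edges with $s_e \leq t$, which trivially satisfies the bound.

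In the main case, $G$ has an edge cut $(S, V \setminus S)$ of size at most $t$. I would apply induction separately to $G[S]$ and $G[V \setminus S]$. For this to work, I need the monotonicity fact that for any edge $e \in E(G[S])$, we have $s_e^{G[S]} \leq s_e^G$; this holds because any maximal $k$-edge-connected vertex-induced subgraph of $G[S]$ containing $e$ is also a $k$-edge-connected vertex-induced subgraph of $G$ containing $e$ (possibly not maximal, but that only lower-bounds $s_e^G$). Consequently, $\{e \in E(G[S]) : s_e^G \leq t\} \subseteq \{e \in E(G[S]) : s_e^{G[S]} \leq t\}$, so by the induction hypothesis the first set has size at most $t(|S|-1)$, and similarly for $V \setminus S$. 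Adding the at most $t$ cut edges gives
\[
t(|S|-1) + t(|V\setminus S|-1) + t = t(n-2) + t = t(n-1),
\]
closing the induction.

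The main thing to verify carefully is the monotonicity claim $s_e^{G[S]} \leq s_e^G$ together with the fact that cut edges themselves correctly get counted (each cut edge $e$ has $s_e^G \leq t$ since no vertex-induced subgraph of $G$ of edge-connectivity $> t$ can contain both endpoints — such a subgraph would have to live entirely on one side of the cut, yet $e$ crosses the cut). Neither step is deep, but conflating vertex-induced subgraphs in $G$ with those in $G[S]$ is the place where a careless argument could slip; spelling out that $G[S][U] = G[U]$ for $U \subseteq S$ handles it.
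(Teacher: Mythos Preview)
The paper does not supply its own proof of this lemma; it is quoted verbatim from Bencz\'ur--Karger with a citation and no argument. Your induction on $n$ via a small cut is correct and is in fact the standard proof from \cite{StrongConnKarger}: split on whether the minimum cut exceeds $t$, and in the small-cut case recurse on the two sides using the monotonicity $s_e^{G[S]} \le s_e^{G}$ (which follows, as you note, from $G[S][U] = G[U]$).
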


\begin{lemma}[\cite{StrongConnKarger}]\label{recipsum}
$\sum_{e \in E} 1/s_e \le n-1$.
\end{lemma}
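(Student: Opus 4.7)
The plan is to prove the inequality by strong induction on $n = |V|$, splitting $G$ along a minimum cut and recursing on each side. Lemma \ref{edgebnd} does not appear to give the bound directly: a partial-summation argument using $|\{e : s_e \le k\}| \le k(n-1)$ yields only an $O(n \log n)$ bound, so a more structural approach is needed.

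For the base case $n = 1$ there are no edges and both sides are zero. For $n \ge 2$, if $G$ is disconnected with components $G_1, \ldots, G_r$ of sizes $n_1, \ldots, n_r$, then since every $k$-connected vertex-induced subgraph (for $k \ge 1$) is connected, every edge's strong connectivity is computed entirely within its own component. The inductive hypothesis applied to each $G_i$ then gives $\sum_e 1/s_e \le \sum_i (n_i - 1) \le n - 1$.

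Otherwise $G$ is connected; let $c$ be its minimum cut, fix a min-cut partition $V = V_1 \sqcup V_2$ with cross-edge set $F$ (so $|F| = c$), and set $G_i = G[V_i]$ with $n_i = |V_i|$. I claim two structural facts. First, every $e \in F$ satisfies $s_e = c$: the bound $s_e \ge c$ holds because $G$ itself is a $c$-strong component (vertex-maximal with min cut $c$); if some $e \in F$ were contained in a $(c{+}1)$-strong component $H$, then the partition $(V(H) \cap V_1,\, V(H) \cap V_2)$ would witness a cut of $H$ with both sides nonempty, and because $H$ is vertex-induced, the $H$-edges crossing this cut are exactly the $G$-edges between $V(H) \cap V_1$ and $V(H) \cap V_2$, which form a subset of $F$ and therefore number at most $c$, contradicting $H$'s min cut of $c+1$. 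Second, for every $e \in E(G_i)$, $s_e(G) \ge s_e(G_i)$: any $k$-strong component of $G_i$ has its vertex set contained in $V_i$, so the induced subgraph in $G$ on that vertex set coincides with the induced subgraph in $G_i$, and it remains a $k$-connected vertex-induced subgraph of $G$.

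Combining these, the cross edges contribute exactly $|F|/c = 1$, while the second fact combined with the inductive hypothesis on $G_1$ and $G_2$ bounds the interior contribution by $\sum_{e \in E(G_i)} 1/s_e(G) \le \sum_{e \in E(G_i)} 1/s_e(G_i) \le n_i - 1$. The total is $1 + (n_1 - 1) + (n_2 - 1) = n - 1$, as desired. The main obstacle is the first structural claim, where the choice of $F$ as a \emph{minimum} cut is essential: if $F$ were merely any cut, a higher-connectivity component could straddle it and inflate $s_e$ above $c$, breaking the accounting.
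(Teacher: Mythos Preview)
The paper does not supply its own proof of this lemma; it is quoted from Bencz\'ur--Karger \cite{StrongConnKarger} without argument. Your inductive proof is correct and follows essentially the same line as the original: split a connected $G$ along a minimum cut of value $c$, note that all $c$ crossing edges have $s_e=c$ (so together they contribute exactly $1$ to the sum), and recurse on the two sides using the monotonicity $s_e(G)\ge s_e(G_i)$. One small step worth making fully explicit in your Claim~2 is that a $k$-strong component of $G_i$ need not remain \emph{maximal} as a $k$-connected vertex-induced subgraph of $G$; however, any $k$-connected vertex-induced subgraph of $G$ extends to a maximal one, so $e$ still lies in a $k$-strong component of $G$ and $s_e(G)\ge k$ follows.
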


The approach to sparsification exemplified by Theorem \ref{sparse-thm} has been widely adopted for many problems related to the original goal of approximating the min-cut (\cite{example1}, \cite{StreamingKarger}). In particular, \cite{StreamingKarger} applies their techniques in the streaming setting, and motivates some of our applications of the above theorems.

Additionally, many works have shown that similar sampling processes work by using different edge-connectivity metrics. Such measures include edge conductance, effective resistance, $s$-$t$ connectivity, and Nagamochi-Ibaraki (NI) indices (\cite{conductance}, \cite{resistance1}, \cite{resistance2}, \cite{resistance3}).

Our results also build on the problem of providing local access to large random graphs through partial sampling. In particular we are inspired by \cite{largegoldreich} and \cite{local-random}. The latter work gives a way to query subsequent neighbors of any node in a random instance of $G(n, p)$. These results will be described more in depth, and expanded upon, in Section \ref{sparsifying_graphs}.

\section{The Sublinear Algorithm}\label{mainch}

In this section we present our main local algorithm. Given access to a large input graph, it locally computes membership to some sparse, connected subgraph. More specifically, we present an LSCG algorithm $\mathcal{A}$ such that on an input $e \in E$, $\mathcal{A}(e)$ answers $accept$ if and only if $e \in H \subseteq G$. Here, $H$ is connected and has at most $O(n(T+\log^2n))$ edges for some parameter $T$. Also, the probe complexity of $\mathcal{A}$ is at most $\widetilde{O}(m/T)$.

\subsection{Main Ideas}

According to Corollary \ref{approx}, if one can approximate the Benczúr-Karger strong connectivities of an edge to within a factor of $\alpha$, then sampling edges accordingly yields an $\epsilon$-sparsification (as described in from Definition \ref{sparsification}). This subgraph will have $O(\alpha n\log n)$ edges. In our algorithm, we give an $\alpha=O(\log n)$-approximation scheme using sublinear many probes.

In addition, we set $\epsilon = 1/2$ for the remainder of this paper, as constructing $1/2$-sparsifications suffice for our purposes. According to Theorems \ref{mincut-thm} and \ref{sparse-thm}, we set the relevant parameters as follows:

\begin{definition}\label{lambda}
$\lambda = \lambda_{1/2} = 64(d+2)\log |V|$
\end{definition}

\begin{definition}\label{lambdaprime}
$\lambda' = \lambda'_{1/2} = 12(d+2)\log |V|$
\end{definition}

Since we have $\epsilon < 1$, note that any $\epsilon$-sparsification of a connected graph is connected itself. This is the case because for any cut in $G$ that has value $c$, the corresponding cut in $G'$ has value at least $c(1-\epsilon) > 0$. Thus, no cut with positive value becomes zero, and the whole graph remains connected. If we ignore edge weights, positive cuts remain positive, so the unweighted version of a such an $\epsilon$-sparsification is a connected subgraph.

Our approximation scheme for edge strong connectivity works by testing ``guesses''. In the next section, we will describe a test such that for a guess $g$, the test rejects with high probability if $g \ge 2\lambda' s_e$ (recall $\lambda'$ from Definition \ref{lambdaprime}) and accepts with high probability if $g \le s_e$. Then we can run this test on powers of $2$ and with high probability, the largest accepted guess $g^*$ has the property that $\frac{s_e}{2} \le g^* \le 2\lambda' s_e$. This statement is proved in Lemmas \ref{smallguess} and \ref{bigguess}. Finally, we can set $\hat{s_e} = \frac{g^*}{2\lambda'}$ to get an $\alpha = 4\lambda' = O(\log n)$-approximation of $s_e$.\footnote{Here we use the notion of approximation scheme as described in Corollary \ref{approx}.}

As we will see, running this test for low-valued guesses requires the largest number of probes of any part of this algorithm. As such, we set some threshold $T$, and only make guesses above this value. If all our guesses are rejected, we simply keep the queried edge for our overall algorithm. Otherwise, we are able to compute some $\hat{s_e}$, and keep the queried edge with probability $\lambda/\hat{s_e}$.

\subsection{Testing a Guess for Strong Connectivity}

Suppose we have a guess $g$ for the strong connectivity $s_e$ of an edge $e = (u, v)$. We propose a test for this $g$ that has the following steps:

First, choose one of its endpoints arbitrarily, say $u$, and initialize $S = V$. Then, run the following procedure $O(\log n)$ times: keep each edge in the vertex-induced subgraph induced by $S$, $G_t$, with probability $p = \lambda'/g$, and output ``reject'' if $v$ is not reachable from $u$ in the resulting graph\footnote{Here we use $\lambda' = 12(d+2)\log|S|$, not $12(d+2)\log n$}. For the next procedure, redefine $S$ to be the set of vertices reachable from $u$ in this subgraph. If we do not reject in any of these iterations, output ``$accept$.'' Algorithm \ref{tester} is described more precisely below.

\begin{algorithm}
	\SetKwInOut{Input}{Input}
	\SetKwInOut{Output}{Output}
	\Input{$G = (V, E)$, $e \in E$, guess $g$}
	\Output{accept or reject}
	\DontPrintSemicolon
	$(u, v) \gets e$ \; 
	$S \gets V$ \;
	\For{$\lceil\log_{3/2}(n)\rceil$ rounds}{
	Let $G_t$ be the subgraph of $G$ induced by $S$ \;
	Construct $G'$ by sampling each edge of $G_t$ with probability $p = \lambda'/g$ \;
	$S \gets$ set of nodes reachable from $u$ in $G'$ \;
	\If{$v \notin S$}{
	Output reject and abort.
	}
	}
	Output accept.
	\caption{\textsc{Edge Strong Connectivity Tester}}
	\label{tester}
\end{algorithm}

By the final setting of $S$, this test distinguishes two important kinds of guesses, as shown in the following two lemmas:

\begin{lemma}\label{smallguess}
If $g \le s_e$, then $v \in S$ with high probability.
\end{lemma}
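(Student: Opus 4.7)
The plan is to argue by induction on the round number that the entire $s_e$-strong component $C$ containing the edge $e$ remains inside $S$ throughout the algorithm, with high probability. Since $g \le s_e$, Definition~\ref{strongdef} guarantees such a vertex-induced $C$ with $u,v \in C$ whose min-cut is at least $s_e \ge g$. The base case $C \subseteq V = S$ is immediate.

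For the inductive step, suppose $C \subseteq S$ at the start of some round. Because $C$ is vertex-induced and $S \supseteq C$, the edges of $G_t$ lying inside $C$ coincide exactly with the edges of the original $C$, so the min-cut of $G_t$ restricted to $C$ is still at least $g$. The algorithm samples each edge with probability $p = \lambda'/g$, where the footnoted $\lambda' = 12(d+2)\log|S| \ge 12(d+2)\log|C|$ since $|S| \ge |C|$. I would now apply Theorem~\ref{mincut-thm} to the vertex-induced subgraph on $C$: if $p \ge 1$ we trivially keep all of $C$; otherwise $p = \lambda'/g \ge \lambda'/\mathrm{mincut}(C)$, so by a standard coupling that first samples at the exact rate required by the theorem and then flips additional independent coins, the edges of $C$ that survive form a $(1/2)$-sparsification of $C$ with failure probability $O(|C|^{-d}) = O(n^{-d})$. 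Since $\epsilon = 1/2 < 1$, every positive cut in $C$ stays positive in the sparsification, so $C$ is connected in $G'$; in particular every vertex of $C$ is reachable from $u$, and the updated $S$ again contains $C$ (and thus $v$).

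A union bound over the $\lceil \log_{3/2} n \rceil$ rounds then gives $C \subseteq S$ at every stage, hence $v \in S$ at termination, with probability $1 - O(n^{-d}\log n)$, which is high probability for any constant $d$ chosen in Definition~\ref{lambdaprime}.

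The one subtlety — the main thing to argue carefully — is the mismatch between what the algorithm does and what Theorem~\ref{mincut-thm} literally states: the algorithm uses a sampling rate driven by the guess $g$ and a $\log|S|$ factor that can shrink across rounds, whereas the theorem prescribes an exact rate $\min(1,\lambda'_\epsilon/c)$ computed from the true min-cut $c$ and the true vertex count. Both discrepancies are in the favorable direction (we sample at least as many edges as the theorem requires, on a graph with at most as many vertices), so the coupling argument above removes the obstacle; everything else is a routine union bound.
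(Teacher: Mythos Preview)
Your proof is correct and follows essentially the same approach as the paper: both identify the $s_e$-strong vertex-induced component containing $e$, apply Theorem~\ref{mincut-thm} to conclude it survives each round of sampling (since $p = \lambda'/g \ge \lambda'/s_e$), and union-bound over the $O(\log n)$ rounds. Your version is more carefully written---you make the induction on rounds explicit and spell out the coupling argument handling the mismatch between the algorithm's sampling rate and the exact rate in Theorem~\ref{mincut-thm}---but the underlying argument is the same.
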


\begin{proof}
Recall from Definition \ref{strongdef} that if an edge $e = (u, v)$ has strong connectivity $s_e$, then it is contained in some vertex-induced subgraph $H \subseteq G$, where $H$ has min-cut $s_e$. Note that the existence of $H$ also implies that for all $e' \in H$, $s_{e'} \ge s_e$. Our test of guess $g$ keeps each edge of the graph with probability $p = \frac{\lambda'}{g}$, and then tests if $u$ and $v$ are in the same connected component. Since $g \le s_e$, we know $p \ge \frac{\lambda'}{s_e}$. By Theorem \ref{mincut-thm}, $H$ is connected w.h.p., and thus so are $u$ and $v$. Thus, $v \in S$ with high probability for each iteration. Since we run $O(\log n)$ iterations, $v$ is in the final $S$ with error probability $O(n^{-d}\log n)$, using a union bound.
\end{proof}

\begin{lemma}\label{bigguess}
If $g \ge 2\lambda' s_e$ for some $e=(u,v)$, then $v \notin S$ with high probability, where $\lambda'$ is defined in Definition \ref{lambdaprime}.
\end{lemma}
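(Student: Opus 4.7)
The plan is to establish a per-round geometric shrinkage claim: in each iteration of the main loop, with high probability either the algorithm outputs $reject$ or the set $S$ contracts by a factor of at least $2/3$. Since $S$ starts at size $n$ and the loop runs $\lceil\log_{3/2}(n)\rceil$ rounds, iterated $2/3$-shrinkage would drive $|S|$ below $1$, which is impossible because $u \in S$ always. Therefore, if the algorithm has not yet rejected, it must do so in one of the coming rounds; once it rejects, $v \notin S$ from then on. A union bound over the $O(\log n)$ rounds combines the per-round guarantees into the overall high-probability statement of the lemma.

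To prove the per-round claim, I would fix a round $t$ with current set $S_t$ and upper bound the probability that both $v \in S_{t+1}$ and $|S_{t+1}| > (2/3)|S_t|$. Enumerating over the possible values of $S_{t+1}$: for any candidate $C \subseteq S_t$ with $u, v \in C$ and $|C| > (2/3)|S_t|$, the event $S_{t+1} = C$ requires that every edge in the cut $\delta_{G[S_t]}(C)$ be deleted by the sampling, so $\Pr[S_{t+1} = C] \le (1-p)^{|\delta_{G[S_t]}(C)|} \le \exp(-p\,|\delta_{G[S_t]}(C)|)$. The hypothesis $g \ge 2\lambda' s_e$ gives $p = \lambda'/g \le 1/(2 s_e)$, so the per-candidate bound becomes $\exp(-|\delta_{G[S_t]}(C)|/(2 s_e))$.

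The union bound over all such $C$ is then controlled by Karger's cut-counting theorem, which bounds the number of cuts of each size in an $n$-vertex graph, combined with Lemmas \ref{edgebnd} and \ref{recipsum} to translate strong-connectivity information into edge counts for $G[S_t]$. The main obstacle, and the delicate part of the argument, is handling candidates $C$ for which $|\delta_{G[S_t]}(C)|$ is small — there the exponential decay alone is insufficient. The crucial structural input is that no $(s_e+1)$-strong component of $G$ can contain both $u$ and $v$ (otherwise the strong connectivity of $e$ would exceed $s_e$); this forces any cut that isolates a large side containing both $u$ and $v$ to cross a substantial portion of the weak-edge skeleton around $e$, so that either the cut is large enough for the exponential decay to bite or such $C$'s are too few for the union bound to blow up. Once this per-round bound of $|S_t|^{-(d+1)}$ (say) is established, applying it $\lceil\log_{3/2}(n)\rceil$ times yields $v \notin S$ with high probability.
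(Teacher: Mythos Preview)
Your per-round shrinkage claim is stated for the wrong quantity, and in that form it is false. Take $G$ to be a clique on $\{u,w_1,\dots,w_{n-2}\}$ together with a single pendant edge $e=(u,v)$. Then $s_e=1$, so $g=2\lambda'$ is admissible and $p=\lambda'/g=1/2$. With this $p$ the clique stays connected with high probability (its min-cut is $n-2$), while the pendant edge survives with probability $1/2$. Hence with probability roughly $1/2$ the round neither rejects nor shrinks $S$ at all: $S_{t+1}=S_t=V$. So the event ``no reject and $|S_{t+1}|>(2/3)|S_t|$'' has probability about $1/2$, not $|S_t|^{-(d+1)}$. The vertex count $|S|$ can be dominated by a single huge $(s_e{+}1)$-strong component containing $u$, and that component never fragments under sampling at rate $p\le 1/(2s_e)$; only the sparse ``weak'' edges between strong components are at risk.

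The fix, and what the paper actually does, is to track a different potential: contract each $(s_e{+}1)$-strong component to a supernode and count how many supernodes remain reachable from $u$'s supernode. Because $u$ and $v$ lie in different supernodes, it suffices to isolate $u$'s supernode. Lemma~\ref{edgebnd} bounds the number of inter-supernode edges inside the current $S$ by $s_e(n'-1)$, where $n'$ is the current number of supernodes; sampling at rate $p\le 1/(2s_e)$ then keeps at most $(2/3)(n'-1)$ of them with high probability by a Chernoff bound, which forces at least a third of the supernodes to separate from $u$'s. Iterating gives the $\lceil\log_{3/2} n\rceil$-round bound directly. This replaces your cut-enumeration union bound (which you yourself flag as delicate and leave unresolved) with a single edge-count estimate plus one Chernoff inequality; no appeal to Karger's cut-counting is needed.
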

\begin{proof}

The subgraph of $G$ induced by edges with strong connectivity $\ge k = s_e + 1$ involves some number of node-disjoint strongly connected components (\cite{StrongConnKarger}). Suppose there are $n'$ such $k$-strong components, where $u$ and $v$ are in different such components because $s_e < k$. We now consider the graph where each such component is its own supernode. Note that edges in this graph correspond to inter-component edges in the edge-induced subgraph of $G$, since they have strong connectivity $\le s_e < k$. Note that if a vertex in $G$ is not incident to any edge with strong connectivity at least $k$, then that vertex will be its own supernode in the resulting graph. By Lemma \ref{edgebnd} the number of the edges in this new graph (i.e. the inter-component edges) is at most $s_e(n'-1)$. Thus, if we sample edges with probability $p = \lambda'/g \le 1/(2s_e)$, then at most $\frac{2}{3}(n'-1)$ of these edges are kept w.h.p. using a Chernoff bound.

Thus, at least $1/3$ of the supernodes become isolated from $u$'s supernode in the sparsified graph. As we run subsequent rounds in Algorithm \ref{tester}, we can consider this supernode graph on the vertices that are still connected to $u$. In each round, either $u$'s supernode becomes isolated or at least $1/3$ of the remaining supernodes become isolated (w.h.p.). Thus, in $\lceil\log_{3/2}(n)\rceil$ rounds, $u$'s supernode must be isolated with high probability. In particular, the error probability is at most $O(n^{-d}\log n)$ by taking a union bound on the error probability from Theorem \ref{mincut-thm}.
\end{proof}

\subsection{Algorithm Description and Complexity Analysis}\label{simple-anal}

There are a few subtleties of this algorithm that remain. For one, running these tests requires fast access to a random, sparsified skeleton of the original graph. This is a nontrivial task, but is achieved in Section \ref{sparsifying_graphs}. For our analysis here, it suffices to know that exploring $u$'s connected component by running a breadth first search can be done in $\widetilde{O}(n+mp)$ probes, where each edge is added to the skeleton graph with probability $p$. Recall that in a test with guess $g$, we sample edges with probability $p = \lambda'/g$. So, the test with the largest probe complexity is that with the smallest guess $g$, which will be at least $T$. Thus, each test runs with probe complexity at most $\widetilde{O}(n+m/T)$.

As mentioned above, we can simplify the number of guesses we have on a particular edge by simply guessing powers of $2$. Note that the strong connectivity of any edge in a simple graph must be between $1$ and $n-1$ inclusive. Since we are only testing guesses that are at least $T$, it is sufficient to have $k$ such guesses, where $2^{k+1} \ge n/T$. This inequality is satisfied for $k = O(\log(n/T)) = O(\log n)$. This bound can be improved for specific edges by noting that for any edge $e = (u, v)$, $s_e \le \min(deg(u), deg(v))$. This is because for any vertex-induced subgraph containing $e$, its min-cut is always at most the cut created by separating $e$'s lower-degree endpoint from the rest of the subgraph. So if we knew that the original input graph had some polynomial upper degree bound $\Delta$, then we would only need at most $O(\log(\Delta/T) = O(\log \Delta)$, though this will not improve the asymptotic bound. For more analysis on the probe complexity for special kinds of graphs, see Section \ref{anal}. We recap our LSCG algorithm below in Algorithm \ref{mainalg}.

\begin{algorithm}
	\SetKwInOut{Input}{Input}
	\SetKwInOut{Output}{Output}
	\Input{$G = (V, E)$, $e = (u, v)\in E$, $T$}
	\Output{accept or reject}
	\DontPrintSemicolon
	$\hat{s_e} \gets 0$ \;
	$g \gets \min(deg(u), deg(v))$ \;
	\While{$g > T$}{
	Run Algorithm \ref{tester} on $G, e, g$ \;
	\If{Accepted}{
	$\hat{s_e} \gets g/(2\lambda')$ \;
	Break \textbf{for} loop \;
	}
	$g \gets g/2$ \;
	}
	\If{$\hat{s_e} = 0$}{
	$\hat{s_e} = T$ \;
	}
	Accept $e$ with probability $\lambda/\hat{s_e}$, otherwise reject.
	\caption{\textsc{Main LSCG Algorithm}}
	\label{mainalg}
\end{algorithm}

With this knowledge, we can upper bound the total probe complexity from running all these tests by $\sum_{k=\log_2 T}^{\log_2 n} O(n + \frac{m\log n}{2^{k}}) \le O(n\log n + m\frac{\log n}{T}) = O((n+m/T)\log n)$. For choices of $T$ such that $T = \widetilde{\omega}(m/n)$, this is $O((m/T)\log n)$. Note that this bound on $T$ is needed to guarantee $\widetilde{o}(m)$ edges in the subgraph, so we assume this restriction for the remainder of the paper.

Additionally, we can bound the number of edges in the final subgraph: firstly, if the input edge $e$ has $s_e \le T$, then $e$ will be accepted. From Lemma \ref{edgebnd}, there are at most $T(n-1) = O(nT)$ such edges. Additionally, the number of edges $e$ with $s_e > T$ that are accepted is at most the number of edges added by the original Benczúr-Karger scheme using an $\alpha$-approximation, which is $O(\alpha n\log n)$ with high probability. Since we have an $O(\log n)$-approximation, this adds $O(n\log^2(n))$ edges to our final connected subgraph. Suppose we consider choices of $T = \omega(\log^2(n))$, which is required to get a probe complexity that is sublinear in $m$ up to polylogarithmic factors. In fact we assume this restriction for the remainder of the paper. Then, the number of edges in the final subgraph is at most $O(nT + n\log^2(n)) = O(nT)$, with high probability.

\section{Accessing Skeletons of General Graphs}\label{sparsifying_graphs}

In Algorithm \ref{mainalg}, specifically through its calls to Algorithm \ref{tester}, we require the ability to search through many skeletons $G'$ of our original graph $G$. In particular, these are subgraphs where the edges of $G$ are kept independently with some specified probability $p$ (alternatively, the edges are independently deleted with probability $1-p$). We require the ability to determine if two specified nodes are path-connected within each skeleton. We test for this by running a breadth first search. Though if we make additional assumptions on the input graph, there may be a more efficient test, which we will discuss in more depth in Section \ref{future}. To implement this breadth first search, we give a way to efficiently access any node's neighbors within $G'$.

In particular, we give an efficient implementation of a data structure that allows us to access all the neighbors of $u$ in $G'$, namely $N_{G'}(u)$ (recall this definition from Section \ref{model}). The goal is to allow this access in $\widetilde{O}(|N_{G'}(u)|)$ queries, which we accomplish by making $|N_{G'}(u)| + 1$ \textsc{next-neighbor} queries on $u$. If we think of $N_{G'}(u)$ as an ordered set, our first \textsc{next-neighbor} query on $u$ gives the first element, the $i$th query gives the $i$th element, and after all the neighbors have been returned, all future queries yield $\bot$. Thus, an efficient implementation of \textsc{next-neighbor} queries allows us to access $N_{G'}(u)$ efficiently; specifically we hope to achieve this access in $\widetilde{O}(|N_{G'}(u)|)$ probes to $G$. This allows us to run a breadth first search on $G'$ in $O(\sum_{u \in V} |N_{G'}(u)|)$ queries, which is $\widetilde{O}(mp)$ with high probability.

There are some immediate barriers to doing this efficiently. If $u \in N_{G'}(v)$, for example, it must always be the case that $v \in N_{G'}(u)$. However, simply having a coin (with weight $p$) for each possible edge and checking the result of each coin flip would require $\Omega(|N_G(u)|)$ total work to determine $N_{G'}(u)$, which is undesirable.

When the input graph is the complete graph $K_n$, it is known to be possible to query $N_{G'}(u)$ via \textsc{next-neighbor} queries using $\widetilde{O}(|N_{G'}(u)|)$ total probes (\cite{local-random})\footnote{They also implements \textsc{random-neighbor} and \textsc{degree} queries with $\widetilde{O}(1)$ probe complexity, though these details are not required for our purposes.}. 

\subsection{Fast \textsc{next-neighbor} Queries on Sparsified $K_n$}\label{bry}

First, we restate and rephrase some of the relevant techniques of \cite{local-random}. In the next section, we present our modifications of their techniques to suit our more general purposes.

Consider a complete graph $K_n$ such that $V(K_n) = [n]$ and for all nodes $u \in [n]$, $N_{K_n}(u)$ is in increasing order. Suppose we aim to provide query access to a skeleton $K'$, which is constructed by keeping each edge of $K_n$ with i.i.d. probability $p$. We do so by maintaining an auxiliary data structure that mimics ``filling in'' the adjacency matrix $\textbf{A}$. One can think of the entries $\textbf{A}[u][v]$ as being $1$ if some \textsc{next-neighbor} query implies that $(u, v) \in K'$, $0$ if the results of the queries definitively implies that $(u,v) \notin K'$, and $\phi$ otherwise to mark that it is to be decided. Mimicking access to $\textbf{A}$ is done by maintaining two types of quantities for each node $u$: $\textbf{last}[u]$ and $P_u$, which we will describe in the next two paragraphs.

$\textbf{last}[u]$ is a pointer to the last neighbor of $u$ that is returned by a \textsc{next-neighbor} query. It is initialized to $-1$, since no neighbors have been returned at this point. Once \textsc{next-neighbor} has been called on $u$ at least $|N_{K'}(u)|$ times, $\textbf{last}[u]$ will continue to be $u$'s final neighbor.

$P_u$ is the ordered set of known neighbors of $u$, given all the prior outputs of \textsc{next-neighbor} queries. It is initialized to the empty set $\{\}$, since nothing is known about $K'$ before any queries are made. Once \textsc{next-neighbor} has been called on $u$ at least $|N_{K'}(u)|$ times, $P_u$ will be exactly $N_{K'}(u)$.

Both of these parts of our data structure will be updated as we make more \textsc{next-neighbor} queries on our sparsified skeleton of $K_n$. They can then be used to infer the entries of the adjacency matrix. Specifically, $\textbf{A}[u][v]$ is $1$ if $u \in P_v$ or $v \in P_u$\footnote{We maintain the invariant that $u \in P_v$ if and only if $v \in P_u$.}. Otherwise, it is $0$ if $u < \textbf{last}[v]$ or $v < \textbf{last}[u]$. If neither of these cases hold, we say $\textbf{A}[u][v] = \phi$.

These inferences are correct as $(u, v) \in K'$ if $u \in P_v$ or $v \in P_u$. We also know that $(u, v) \notin K'$ once a \textsc{next-neighbor} query on $u$ yields a node that comes \textit{after} $v$, meaning $\textbf{last}[u] > v$ (or similarly $\textbf{last}[v] > u$).

With this data structure, \cite{local-random} is able to efficiently simulate the distribution $F(u, a, b)$ of $u$'s first neighbor in $K'$ whose ID is between nodes $a$ and $b$, exclusive. This distribution is conditioned on the current known state of $\textbf{A}$ (i.e. the $\textbf{last}[u]$'s and $P_u$'s), and takes advantage of the fact that the location of the next neighbor is distributed according to a hyper-geometric distribution. Its exact implementation is omitted in this paper. However, Biswas et. al. note that sampling from this distribution can be done in \emph{constant} time. For more details on this, refer to \cite{local-random}.

With this information, the authors implement \textsc{next-neighbor} queries using the steps of Algorithm \ref{biswas}. For an input vertex $u$, we start at its last accessed neighbor $v = \textbf{last}[u]$, and look until its next known neighbor $w_u$. In this interval we sample its next appearing neighbor, which is given by the hypergeometric distribution $F(u, v, w_u)$ until we find a new vertex $v$ that was previously not known to neighbor $u$\footnote{Note that by construction, our sampled $v$'s will always have the property that $\textbf{last}[u] \ge v$, so what remains to check is if $\textbf{last}[v] \ge u$. If this latter inequality holds, then $(u,v)$ was already a determined edge of $K'$.}. \cite{local-random} show that with high probability, we need to sample from $F$ at most $O(\log n)$ times to find such a vertex. Then, we update the data structure and output the discovered neighbor $v$, if there exists one.

\begin{algorithm}
	\SetKwInOut{Input}{Input}
	\SetKwInOut{Output}{Output}
	\Input{$G = (V, E)$, $u \in V$}
	\Output{accept or reject}
	\DontPrintSemicolon
	$v \gets \textbf{last}[u]$ \;
	$w_u \gets \min\{(P_u \cap (v, n]) \cup \{n+1\}\}$ \;
	\While{$v \neq w_u$ \text{and} $\textbf{last}[v] \ge u$}{
	Sample $v \sim F(u, v, w_u)$ \;
	}
	\If{$v \neq w_u$}{
	$P_u \gets P_u \cup \{v\}$ \;
	$P_v \gets P_v \cup \{u\}$
	}
	$\textbf{last}[u] \gets v$ \;
	\leIf{$v \le n$}{\Return $v$ \;}{\Return $\bot$}
	\caption{$\textsc{next-neighbor}(u)$ on $K_n$ (\cite{local-random})}
	\label{biswas}
\end{algorithm}

Each line of Algorithm \ref{biswas} requires $O(1)$ work; since the while loop is known to execute $O(\log n)$ times with high probability, a \textsc{next-neighbor} query runs in $O(\log n)$ time. Thus, we can recover $N_{K'}(u)$ with $O(|N_{K'}(u)|\log n)$ time and probe complexity.

\subsection{Adapting this Approach for General Graphs}

While \cite{local-random} provides and proves the ability to efficiently make $\textsc{next-neighbor}$ queries on a random skeleton $K'$ of $K_n$ (as described in Section \ref{bry}), it is possible to extend this approach for skeletons $G'$ of general graphs, assuming our basic types of queries from Section \ref{model}.

Implementing this extension requires a slight tweak of the definitions of the data structure. Instead of storing the ID of the last outputted neighbor of $u$ in $\textbf{last}[u]$, we store its index in $N(u)$. Similarly, instead of storing the known neighbors of $u$ in $P_u$, we maintain a sorted list of the indices in $N(u)$ of $u$'s known neighbors in $G'$. Additionally, we modify the implementation of $F(u, a, b)$ to represent the distribution of $u$'s first neighbor in $K'$ whose \emph{index} in $N(u)$ is between values $a$ and $b$, exclusive. These will be needed to access the adjacency arrays containing $u$'s neighbors.

Generalizing as such gives us Algorithm \ref{extension} below. Note that we use each of the graph queries described in Section \ref{model}. \textsc{degree} queries are needed to determine the largest possible index that corresponds to one of $u$'s neighbors. \textsc{neighbor} queries are needed to use these indices to quickly access the relevant vertices. Finally, \textsc{adjacency} queries help maintain the invariant that $i \in P_u$ and $v$ being $u$'s $i$th neighbor is equivalent to $j \in P_v$ and $u$ being $v$'s $j$th neighbor.

\begin{algorithm}[!ht]
	\SetKwInOut{Input}{Input}
	\SetKwInOut{Output}{Output}
	\Input{$G = (V, E)$, $u \in V$}
	\Output{accept or reject}
	\DontPrintSemicolon
	$v_{\text{index}} \gets \textbf{last}[u]$ \;
	$w_{\text{index}} \gets \min\{(P_u \cap (v_{\text{index}}, \textsc{degree}(u)]) \cup \{\textsc{degree}(u)+1\}\}$ \;
	\While{$v_{\text{index}} \neq w_{\text{index}}$ and $\textbf{last}[v_{\text{index}}] \ge u$}{
	Sample $v_{\text{index}} \sim F(u, v_{\text{index}}, w_{\text{index}})$
	}
	$v \gets \textsc{neighbor}(u, v_{\text{index}})$ \;
	\If{$v_{\text{index}} \neq w_{\text{index}}$}{
	$P_u \gets P_u \cup \{v_{\text{index}}\}$ \;
	$P_v \gets P_v \cup \{\textsc{adjacency}(u, v)\}$
	}
	$\textbf{last}[u] \gets v_{\text{index}}$ \;
	\leIf{$v_{\text{index}} \le \textsc{degree}(u)$}{\Return $v$ \;}{\Return $\bot$}
	\caption{$\textsc{next-neighbor}(u)$}
	\label{extension}
\end{algorithm}

Each line of Algorithm \ref{extension} requires $O(1)$ work; since the while loop is known to execute $O(\log n)$ times with high probability, this adapted \textsc{next-neighbor} query runs in $O(\log n)$ time. Thus, we can recover $N_{G'}(u)$ with $O(|N_{G'}(u)|\log n)$ time and probe complexity, using $|N_{G'}(u)| + 1$ \textsc{next-neighbor} queries.

\section{Further Analysis and Comparison of Results}\label{anal}

In this Section, we further our analysis of Algorithm \ref{mainalg}. There are special classes of graphs where the probe complexity of our algorithm can be improved significantly. Additionally, there are certain cases of the LSCG problem where we improve on known results, or provide results where none were known previously.

\subsection{Further Performance Analysis}

One useful observation about our analysis is that Algorithm \ref{mainalg} on an edge $e$ will only run Algorithm \ref{tester} as a subroutine on a guess $g$ if $g/2 \ge s_e$, with high probability. This is the case because the main algorithm multiplies rejected guesses by $1/2$, and w.h.p. Algorithm \ref{tester} accepts any $g \le s_e$. The overall probe complexity is largest when testing the smallest guess, and the probe complexity of a guess $g$ is $\widetilde{O}(m/g)$. Since we only test guesses $g \ge T$, the probe complexity of Algorithm \ref{mainalg} on an edge $e$ can be refined to $\widetilde{O}(\min(m/T, m/s_e))$.

Without any assumptions about the graph or about the queried edge, the probe complexity could be as large as $\widetilde{O}(m/T)$. However, if our graph has min-cut $c \ge T$, then it must be true that for all edges $e \in E$, $s_e \ge c$. Thus, we can tighten our probe complexity bound to $\widetilde{O}(m/c)$ in such graphs. In particular, on strongly-connected and dense graphs ($c = \Theta(n)$, $m = \Theta(n^2)$), our algorithm achieves a probe complexity of just $\widetilde{O}(n)$. Also, since all edges have strong connectivity at least $T$, our algorithm finds an $O(\log n)$-approximation for the strong connectivity of every edge. Thus, the number of edges in the final subgraph is at most $O(n\log^2 n)$ by Corollary \ref{approx}.

These statements are true without the algorithm having to know or compute the value of $c$; we just have to choose $T \le c$. This is an improvement over Theorem \ref{mincut-thm}, where knowing the value of $c$ in advance is required. Additionally, since having a min-cut of $c$ implies $m = \Omega(nc)$\footnote{Since the degree of each node must be at least $c$.}, our result improves on the number of edges in the sparsified subgraph.

We can also reason about the expected probe complexity of a random edge:

\begin{theorem}
The average probe complexity when this algorithm is queried on a random edge is $\widetilde{O}(nT)$.
\end{theorem}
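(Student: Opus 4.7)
The plan is to sum the per-edge probe complexity over all $m$ edges and then divide by $m$. Writing $\tau(e)$ for the probe complexity of Algorithm \ref{mainalg} on query $e$, the refined bound established at the opening of Section \ref{anal} gives $\tau(e) = \widetilde{O}(\min(m/T,\, m/s_e))$ with high probability. The average probe complexity over a uniformly random edge is $\tfrac{1}{m}\sum_{e \in E}\tau(e)$, and the two Benczúr-Karger lemmas recalled in Section \ref{ch2:b-k} are exactly the right tools to control the two branches of the minimum.

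First, I would partition the edges into a low-connectivity part $L = \{e : s_e \le T\}$ and a high-connectivity part $H = \{e : s_e > T\}$. On $L$ the minimum is attained at $m/T$, and combined with the bound $|L| \le T(n-1)$ from Lemma \ref{edgebnd} this yields $\sum_{e \in L}\tau(e) = \widetilde{O}(T n \cdot m/T) = \widetilde{O}(mn)$. On $H$ the minimum is attained at $m/s_e$; factoring out $m$ and invoking Lemma \ref{recipsum} gives
\[
\sum_{e \in H}\tau(e) \;=\; \widetilde{O}\!\left(m\sum_{e \in H}\tfrac{1}{s_e}\right) \;\le\; \widetilde{O}\!\left(m \sum_{e \in E}\tfrac{1}{s_e}\right) \;\le\; \widetilde{O}(mn).
\]
Adding the two contributions and dividing by $m$ produces an average probe complexity of $\widetilde{O}(n)$ per random edge, which is in particular $\widetilde{O}(nT)$ for any $T \ge 1$.

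The main technical obstacle is that the per-edge bound $\tau(e) = \widetilde{O}(\min(m/T, m/s_e))$ is only a high-probability statement, inherited from Lemmas \ref{smallguess} and \ref{bigguess} and from the claim that the geometric sweep of guesses in Algorithm \ref{mainalg} halts near $s_e$ (so that the per-guess costs $\widetilde{O}(m/g)$ sum to a geometric series dominated by the smallest guess). To convert a per-edge high-probability bound into an honest bound on the expected average, I would union-bound over all $m$ edges, absorbing the extra logarithmic factor into $\widetilde{O}$ by choosing the constant $d$ in Definition \ref{lambdaprime} large enough; equivalently, one can note that on the low-probability failure event each edge contributes at most the trivial worst case $O(m)$, whose contribution is negligible once the failure probability is driven polynomially small. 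With that bookkeeping in place, the edge-partition argument above closes the proof, and the substantive content is entirely carried by the two structural lemmas on strong connectivity.
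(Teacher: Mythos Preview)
Your argument is correct and follows essentially the same route as the paper: partition edges by whether $s_e \le T$ or $s_e > T$, use Lemma~\ref{edgebnd} to bound the count of low-connectivity edges and Lemma~\ref{recipsum} to bound the reciprocal sum over the rest, then divide by $m$. In fact your arithmetic is slightly sharper than the paper's: both branches contribute $\widetilde{O}(mn)$ to the numerator, so the average is $\widetilde{O}(n)$ rather than $\widetilde{O}(nT)$; the paper writes $mnT$ for the low-connectivity term where $mn$ would suffice, and your observation that $\widetilde{O}(n)\subseteq\widetilde{O}(nT)$ is the right way to reconcile this with the stated theorem. Your added paragraph handling the conversion from a per-edge high-probability bound to an expected average is more careful than the paper, which treats the $\widetilde{O}(\min(m/T,m/s_e))$ bound as if it were deterministic.
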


\begin{proof}
We compute this quantity as the average probe complexity over all edges, which is $\frac{\sum_{e \in E} \widetilde{O}(\min(m/T, m/s_e))}{m}$. We can upper bound the numerator of this expression in a similar manner to how we bounded the number of edges in the resulting subgraph at the end of section \ref{simple-anal}. By Lemma \ref{edgebnd}, at most $nT$ edges have the property that $\min(m/T, m/s_e) = m/T$. Also from Lemma \ref{recipsum}, we know that $\sum_{e \in E} 1/s_e \le n - 1$. So, 

$$\frac{\sum_{e \in E} \widetilde{O}(\min(m/T, m/s_e))}{m} = \widetilde{O}(\frac{mnT + mn}{m}) = \widetilde{O}(nT)$$
\end{proof}

Lastly, it should be noted that the asymptotics of our algorithm's time complexity matches that of its probe complexity. This is because the most computationally intensive step of Algorithm \ref{mainalg} is in running a breadth first search, in which the probe and time complexities match.

\subsection{Comparison of Bounds to Related Works}

As this methodology for locally constructing sparse connected subgraphs differs greatly from the approaches used thus far, it is useful to compare the successes of these techniques. In particular, we will mainly compare probe complexities of this result with those achieved in \cite{spannerLCAs} for the LSCG problem, which is the only other result that applies to accessing connected subgraphs with more than $(1+\epsilon)n$ edges.

Their main results were as follows:

\begin{enumerate}
    \item For graphs with degree upper bound $\Delta$, one can construct a subgraph with $\widetilde{O}(n^{1+1/k})$ edges using $\widetilde{O}(\Delta^4 n^{2/3})$ probes per edge.
    
    \item For general graphs and $r \in \{2, 3\}$, one can construct a subgraph with $\widetilde{O}(n^{1+1/r})$ edges in $\widetilde{O}(n^{1-1/(2r)})$ probes per edge.
\end{enumerate}

Our algorithm is not an improvement over the latter result, as we only generate a subgraph with $\widetilde{O}(n^{3/2})$ edges at the cost of $\widetilde{O}(m/\sqrt{n})$ probes per edge, which is worse for the practical case of $m = \Omega(n^{3/2})$. Similarly, our new results give worse probe complexities for subgraphs with $\widetilde{O}(n^{4/3})$ edges. That said, those results were previously the only ones known to work on general graphs. Our approach improves on this problem by allowing for a subgraph with any upper bound (in particular, one that is $\widetilde{O}(n^{4/3})$) on the number of edges, given a general graph.

For degree-bounded graphs, there are parameters where our algorithm improves on the probe complexity of the first result. In such graphs, we can construct a subgraph with $\widetilde{O}(n^{1+1/k})$ edges in $\widetilde{O}(m/n^{1/k}) = \widetilde{O}(n^{1-1/k}\Delta)$ probes, setting our threshold parameter $T = n^{1/k}$. This is better than $\widetilde{O}(\Delta^4n^{2/3})$ for graphs with degree maximum $\Delta$ of $\Omega(n^{1/9})$\footnote{Specifically, $\Delta = \widetilde{\omega}(n^{1/9-1/(3k)})$}.

Both of the above prior results also come with stretch guarantees, whereas the techniques in this paper do not immediately allow for any such guarantee. However, we discuss a possible remediation of this in section \ref{future}.

\section{Future Work and Open Questions}\label{future}

While the setup of the LSCG problem is basic and motivated, there is still much to learn. Most of the works in this area have been done using similar techniques and on bounded degree graphs. Also, the goal is often to give access to a subgraph with at most $f(n) = (1+\epsilon)n$ edges for some $\epsilon > 0$. Only recently has there been work on graphs of unbounded degree, or with less strict bounds on the number of edges in the subgraph (\cite{spannerLCAs}).

While this work considers looser bounds on the final number of edges, it would be interesting to look at bounds that are stricter but not as much as $(1+\epsilon)n$. For example, are there techniques that perform asymptotically better than that those yielding subgraphs with at most $(1+\epsilon)n$ edges, but allow as many as $2n$? $3n$? $cn$ for some $c \in \mathbb{N}$? \cite{spannerLCAs} proved a lower bound of $\Omega(\sqrt{n})$ probes on constant degree-bounded graphs to yield a subgraph with $(1+\epsilon)n$ edges, and this bound is achieved in many special cases of such graphs (\cite{SpanningGraphs}). Seeing that this lower bound has not been matched yet for general constant degree-bounded graphs, it might be useful to relax the problem by allowing for more edges.

Additionally, for $\Delta$ degree-bounded graphs the only result for subgraphs with $\omega(n)$ edges requires $\Omega(n^{2/3})$ probes (\cite{spannerLCAs}). Since the best known lower bound is $\Omega(n^{1/2})$ for $\Delta = O(n^{1/2})$, there is a decent gap in our understanding within the general bounded-degree graph regime. This remains true when we relax the degree bound constraint. The optimal probe complexity for an algorithm to access a connected subgraph with $\widetilde{O}(n^{1+\epsilon})$ edges, for $\epsilon \ge 0$, is still unknown in general. It is reasonable to think that this gap between the known upper and lower bounds can be closed, in part because the prior works also aim for a subgraph of low stretch (\cite{levi2015constructing}, \cite{minorfree}, \cite{LenzenL17}, \cite{spannerLCAs}). It would be interesting to see what improvements can be made on the probe complexity of LSCG algorithms if constraints on the stretch are not imposed.

As discussed towards the end of Section \ref{ch2:b-k}, the Benczúr-Karger sampling scheme also works with several metrics $m_e$ of different edges, achieving a subgraph with $\widetilde{O}(n)$ edges where we keep each edge with probability $\frac{O(\log n)}{m_e}$. These metrics include edge conductance, Nagamochi-Ibaraki indices, and other measures of connectivity. Finding a way to locally approximate these measures is a compelling problem in and of itself, but doing so would also yield an LSCG algorithm with a similar approach to ours.

As for our provided algorithm, it is likely that the analysis can be optimized further. To find out if $u$ and $v$ are connected in the skeleton graphs of Algorithm \ref{tester}, we run a breadth first search, which may look at all the nodes and edges of $u$'s connected component. Under certain assumptions\footnote{For example, if the graph is rapid mixing.}, it might be possible to check if $v$ is in $u$'s connected component much more efficiently. One conceivable improvement of the analysis might be to show that on some strong class of input graphs and with reasonable probability, the skeleton graphs of Algorithm \ref{tester} satisfy one of those desired assumptions, for some range of strong connectivity guesses $g$. In turn, this would allow for a more efficient version of Algorithm \ref{tester}. 

Another improvement might come in bounding the size of the vertex-induced subgraph $H\ni e$ with min-cut $s_e$ guaranteed by the definition of strong connectivity (Definition \ref{strongdef}). Having such an understanding would upper bound the required depth of the BFS of Algorithm \ref{tester}, and give stretch guarantees to the final subgraph. A similar breakthrough could also be achieved by the relaxed hope that for some parameters $\alpha, \beta$, if an edge has strong connectivity $s_e$, then there exists a vertex-induced subgraph $H'\ni e$ with min-cut $\alpha s_e$ and of size $f(\beta)$.

A similar possible approach could be to study a local version of strong connectivity:

\begin{definition}
Given an edge $e$ in a graph $G = (V, E)$ where $e \in E$ and some distance parameter $r$, let the \emph{local strong connectivity} $s_{e, r}$ of $e$ be the largest value $k$ such that $e$ is contained in a $k$-strong component $H$, where all nodes in $H$ are within distance $r$ of an endpoint of $e$.
\end{definition}

It would be useful if on some $\Delta$ bounded-degree class of graphs, ``most'' edges $e$ have the property that $s_e/s_{e,r} \le \alpha$. Then, by calculating the local strong connectivity and sampling according to Theorem \ref{sparse-thm}, we would yield a connected subgraph with $O(\alpha n\log n)$ edges. Additionally, running a breadth first search in a bounded degree graph with depth at most $r$ uses only $\widetilde{O}(\Delta^r)$ probes. Even if $O(\alpha n\log n)$ edges had the property that $s_e/s_{e,r} > \alpha$, sampling these edges accordingly would add a small number of edges to the $1/2$-sparsification, notably maintaining the asymptotics on the number of edges in the final subgraph. Determining what $\alpha$ values are possible for given values of $r$ seem directly related to the study of Extremal Graph Theory. An overview of the area can be found in \cite{fredi2013history}.

Finally, while the lower bounds of probe complexity achieved by \cite{spannerLCAs} for the LSCG problem are essentially tight for certain, constant degree-bounded graphs (\cite{SpanningGraphs}), this is not true for denser graphs. In this regime, there is a large gap between the known upper and lower bounds of how many probes are needed to achieve LSCG algorithms. Any improvement on these bounds for the probe complexities of LSCG algorithms in this setting, or the problem at large, is still of great interest.

\newpage
\bibliographystyle{alpha}
\bibliography{main}
\end{document}